\DeclarePairedDelimiter\ceil{\lceil}{\rceil}
\DeclarePairedDelimiter\floor{\lfloor}{\rfloor}
\DeclarePairedDelimiter\abs{\lvert}{\rvert}   
\newtheorem{lemma}{Lemma}
\newtheorem{theorem}{Theorem}
\newtheorem{defn}{Definition}
\renewcommand\appendix{\par
  \setcounter{section}{0}%
  \setcounter{subsection}{0}%
  \setcounter{equation}{0}%
  \setcounter{table}{0}
  \setcounter{figure}{0}
  \gdef\theequation{\@Alph\c@section.\arabic{equation}}%
  \gdef\thefigure{\@Alph\c@section.\arabic{figure}}%
  \gdef\thetable{\@Alph\c@section.\arabic{table}}%
  \gdef\thesection{\Alph{section}}%
  \@addtoreset{equation}{section}%
  \@addtoreset{table}{section}
  \@addtoreset{figure}{section}
}
\begin{document}
\begin{frontmatter}

\title{Data stream fusion for accurate quantile tracking and analysis}

\author [unile] {Massimo~Cafaro\corref{cor1}}
\ead{massimo.cafaro@unisalento.it}
\cortext[cor1]{Corresponding author}
\author [unile] {Catiuscia~Melle}
\ead{catiuscia.melle@unisalento.it}
\author [unile] {Italo~Epicoco}
\ead{italo.epicoco@unisalento.it}
\author [unile] {Marco~Pulimeno}
\ead{marco.pulimeno@unisalento.it}
\address[unile]{University of Salento, Lecce, Italy}

\begin{abstract} 
\textsc{UDDSketch} is a recent algorithm for accurate tracking of quantiles in data streams, derived from the \textsc{DDSketch} algorithm. \textsc{UDDSketch} provides accuracy guarantees covering the full range of quantiles independently of the input distribution and greatly improves the accuracy with regard to \textsc{DDSketch}. In this paper we show how to compress and fuse data streams (or datasets) by using \textsc{UDDSketch} data summaries that are fused into a new summary related to the union of the streams (or datasets) processed by the input summaries whilst preserving both the error and size guarantees provided by \textsc{UDDSketch}. This property of sketches, known as mergeability, enables parallel and distributed processing. We prove that \textsc{UDDSketch} is fully mergeable and introduce a parallel version of \textsc{UDDSketch} suitable for message-passing based architectures. We formally prove its correctness and compare it to a parallel version of \textsc{DDSketch}, showing through extensive experimental results that our parallel algorithm almost always outperforms the parallel \textsc{DDSketch} algorithm with regard to the overall accuracy in determining the quantiles. 
\end{abstract}

\begin{keyword}
Quantiles, sketches, message-passing.
\end{keyword}

\end{frontmatter}


\section{Introduction}
\label{sec:introduction}

Mergeability of data summaries is an important property \cite{Agarwal} since it allows parallel and distributed processing of datasets. In general, given two summaries on two datasets,  mergeability means that there exists an algorithm to merge the two summaries into a single summary related to the union of the two datasets, simultaneously preserving the error and size guarantees. Big volume data streams (or big data) can therefore be compressed and fused by means of a suitable, mergeable sketch data structure.

To formally define the concept of mergeability, we shall denote by $S()$ a summarization algorithm, by $D$ a dataset, by $\epsilon$ an error parameter and by $S(D, \epsilon)$ a valid summary for $D$ with error $\epsilon$ produced by $S()$. The summarization algorithm $S()$ is mergeable if there is an algorithm $\mathcal{A}$ that, given two input summaries $S(D_1, \epsilon)$ and $S(D_2, \epsilon)$, outputs a summary $S(D_1 \uplus D_2, \epsilon)$ (here $\uplus$ stands for the multiset sum operation \cite{Syropoulos}). 

Even though mergeability is a fundamental property of data summary, merging algorithms may not be necessarily simple or may be complex to formally prove correct. In particular, merging algorithms for the problems of heavy hitters and quantiles were not known until a few years ago. 

Regarding heavy hitters, Cormode and Hadjieleftheriou presented in  2009 \cite{Cormode} a survey of existing algorithms, classifying them as either counter--based or sketch--based. In the concluding remarks, Cormode and Hadjieleftheriou stated that \textquotedblleft In the distributed data case, different parts of the input are seen by different parties (different routers in a network, or different stores making sales). The problem is then to find items which are frequent over the union of all the inputs. Again due to their linearity properties, sketches can easily solve such problems. It is less clear whether one can merge together multiple counter--based summaries to obtain a summary with the same accuracy and worst--case space bounds\textquotedblright. 

The first merging algorithm for summaries obtained by running the \textsc{Misra-Gries} algorithm \cite{Misra82} (rediscovered and improved by \cite{DemaineLM02} and \cite{Karp} and also known as \textsc{Frequent})  was published in 2011 \cite{cafaro-tempesta}. One year later, \cite{Agarwal} provided a new merge algorithm for \textsc{Frequent} and \textsc{Space Saving} \cite{Metwally2006}, showing that the summaries of these algorithms are isomorphic. The same paper also provided a merging algorithms for Greenwald-Khanna quantile summaries. Later, improved merging algorithms for both \textsc{Misra-Gries} and \textsc{Space Saving} summaries were presented \cite{Cafaro-Pulimeno} \cite{Cafaro-Pulimeno-Tempesta}.

We formally prove that our \textsc{UDDSketch} \cite{Epicoco2020147604} data summary for tracking quantiles is mergeable, design and analyze a corresponding parallel algorithm and provide extensive experimental results showing the excellent scalability and accuracy achieved. This result enables parallel and distributed processing of big volume data streams (or big data), that can be compressed and fused for accurate quantile tracking and analysis.

The rest of this paper is organized as follows. We recall related work in Section \ref{related}. The merge procedure is presented in Section \ref{mergeability} and it is formally proved to be correct in Section \ref{correctness}. Experimental results are provided and discussed in Section \ref{experiments}. Finally, we draw our conclusions in Section \ref{conclusions}.

\section{Related Work}
\label{related}

\textsc{UDDSketch} is based on the \textsc{DDSketch} algorithm \cite{Masson}, and  achieves better accuracy by using a different, carefully designed collapsing procedure. Basically, \textsc{DDSketch} allows computing quantiles in a streaming setting, with accuracy defined as follows. Let $S$ be a multi-set  of size $n$ over $\mathbb{R}$ and $R(x)$ the rank of the element $x$, (the number of elements in $S$ smaller than or equal to $x$). Then, the item $x$ whose rank $R(x)$ in the sorted multi-set $S$ is $\floor{1+q(n-1)}$ (respectively $\ceil{1+q(n-1)}$) for  $0 \leq q \leq 1$ is the lower (respectively upper) $q$-quantile  item $x_q \in S$. For instance, $x_0$ and $x_1$ are respectively the minimum and maximum element of $S$, whilst $x_{0.5}$ is the median. We are now ready to define relative accuracy.

\begin{defn} Relative accuracy. $\tilde{x}_q$ is an $\alpha$-accurate $q$-quantile if
  $\abs{\tilde{x}_q - x_q} \leq \alpha x_q$ for a given $q$-quantile item
  $x_q\in S$. A sketch data structure is an $\alpha$-accurate $(q_0,q_1)$-sketch if it can output $\alpha$-accurate $q$-quantiles for ${q_0 \leq q \leq q_1}$.
 \end{defn}
 
The \textsc{DDSketch} data summary is a collection of buckets. The algorithm handles items $x \in \mathbb R_{> 0}$ and requires in input two parameters: the first one, $\alpha$, is related to the user's defined accuracy; the second one, $m$, represents the maximum number of buckets allowed. Using $\alpha$, the algorithm derives the quantity $\gamma = \frac{1+\alpha}{1-\alpha}$ which is used to define the boundaries of the $i$th bucket $B_i$. All of the values $x$ such that $\gamma^{i-1} < x \leq \gamma^i$ fall in the bucket $B_i$, with $i = \ceil{\log_\gamma{x}}$, which is just a counter variable initially set to zero. We recall here that \textsc{DDSketch} can also handle negative values by using another sketch in which an item $x \in \mathbb R_{< 0}$ is handled by inserting $-x$.

Inserting a value is done by simply incrementing the counter by one; similarly deleting a value requires decrementing by one the corresponding counter (when a counter reaches the value zero, the corresponding bucket is discarded and thrown away). Initially the summary is empty, and buckets are dynamically added as needed. It is worth noting here that bucket indexes are dynamic as well, depending just on the input value $x$ to be inserted and on the $\gamma$ value. In order to avoid that the summary grows without bounds, when the number of buckets in the summary exceeds the maximum number of $m$ buckets, a collapsing procedure is executed. The collapse is done on the first two buckets with counts greater than zero (alternatively, it can be done on the last two buckets). Let the first two buckets be respectively $B_y$ and $B_z$, with $y < z$. Collapsing works as follows: the count stored by $B_y$ is added to $B_z$, and $B_y$ is removed from the summary. Algorithm \ref{dds} presents the pseudo-code for the insertion of a value $x$ into the summary $\mathcal{S}$. 

\begin{algorithm}
  \caption{DDSketch-Insert($x, \mathcal{S}$)}
  \label{dds}
  \begin{algorithmic}
  \Require {$x \in \mathbb R_{> 0}$}
  \State $i \leftarrow \ceil{\log_\gamma{x}}$
  \If{$B_i \in \mathcal{S}$}
  	\State $B_i \leftarrow B_i + 1$
  \Else
  	\State $B_i \leftarrow 1$
  	\State $\mathcal{S} \leftarrow \mathcal{S} \cup B_i$
  \EndIf
  \If{$\abs{\mathcal{S}} > m$}
  	\State let $B_y$ and $B_z$ be the first two buckets
    \State $B_z \leftarrow B_y + B_z$
    \State $\mathcal{S} \leftarrow \mathcal{S} \smallsetminus B_y$
  \EndIf
    \end{algorithmic}
\end{algorithm}

\textsc{UDDSketch} uses a uniform collapsing procedure that provides far better accuracy with regard to \textsc{DDSketch}. In practice, we collapse all of the buckets, two by two. Given a pair of indices $(i,  i+1)$, with $i$ an odd index and $B_i \neq 0$ or $B_{i+1} \neq 0$, we create and add to the summary a new bucket with index $j = \lceil \frac{i}{2} \rceil$, with counter value equal to the sum of the $B_i$ and $B_{i+1}$ counters. The new bucket replaces the two collapsed buckets. Algorithm \ref{unif-collapse} reports the pseudocode of the uniform collapse procedure.

\begin{algorithm}
\caption{UniformCollapse($\mathcal{S}$)}
	\label{unif-collapse}
 \begin{algorithmic}
	\Require {sketch $\mathcal{S} = \lbrace B_i \rbrace_i$}
	\ForAll{ $\lbrace i: B_i > 0 \rbrace$ }
		 \State $j \leftarrow  \lceil \frac{i}{2} \rceil$
		 \State $B'_{j} \leftarrow B'_{j} + B_{i}$
	\EndFor
	\State \Return $\mathcal{S} \leftarrow \lbrace B'_i \rbrace_i$
\end{algorithmic}
\end{algorithm}

In \cite{Epicoco2020147604} we provide a theoretical bound on the accuracy achieved by the \textsc{UDDSketch} data summary.

\section{Mergeability of \textsc{UDDSketch}}
\label{mergeability}

Letting $k(n, \epsilon)$ be the maximum size of a summary $S(D, \epsilon)$ for any $D$ consisting of $n$ items, the size of the merged summary $S(D_1 \uplus D_2, \epsilon)$ is, in general, at most $k(|D_1| + |D_2|, \epsilon)$. In our case, the maximum size of the \textsc{UDDSketch} data summary $m$ is independent of $n$, being the sketch a collection of at most $m$ buckets with $m = O(1)$ (from a practical perspective, $m$ can be a small constant; as an example, $m = 500$ is already enough to provide good accuracy). Therefore, we shall denote the maximum size of our summary as $k(m, \epsilon)$. We shall show that the size of a merged summary $S(D_1 \uplus D_2, \epsilon)$ for \textsc{UDDSketch} is  still $k(m, \epsilon)$.

Our parallel \textsc{UDDSketch} algorithm is both simple and fast. Basically, the input dataset, consisting of $n$ items, is partitioned among the available $p$ processes, so that each process $p_i$ is in charge of processing either $\ceil{\frac{n}{p}}$ or $\floor{\frac{n}{p}}$ items using its own \textsc{UDDSketch} data structure $\mathcal{S}_i$. Next, all of the processes execute a parallel reduction, using as user's defined reduction operator the  Algorithm \ref{merge}, which works as follows. 

We shall denote by $\lbrace B^i_k \rbrace_k$ the set of buckets of the sketch $\mathcal{S}_i$, and by $m$ the maximum number of buckets related to the size of a sketch. The algorithm merges two input sketches $\mathcal{S}_1$ and $\mathcal{S}_2$; without loss of generality, we assume that the $\gamma$ values for the two sketches are the same (full details shall be provided in the next Section, in which we formally prove the correctness of our merge procedure). 

An \textsc{UDDSketch} data structure $\mathcal{S}_m$, which shall be returned as the merged sketch, is initialized. The merge procedure is based on the fact that given the common $\gamma$ value, each bucket interval is fixed. Therefore, in order to merge two sketches it is enough to add the counters of buckets covering the same interval. For the remaining buckets in $\mathcal{S}_1$ and $\mathcal{S}_2$ we just  create a bucket in the merged sketch with the same count. As a consequence, merging is done by scanning the buckets of $\mathcal{S}_1$ and $\mathcal{S}_2$ and considering only those buckets whose counter is greater than zero. However, the newly created $\mathcal{S}_m$ sketch may exceed the size limit. Therefore, we check if the size of $\mathcal{S}_m$ exceeds $m$ buckets and, in case, we invoke the \textsc{UDDSketch} \textsc{UniformCollapse()} procedure to enforce the constraint on the size. Finally, we return the merged sketch $\mathcal{S}_m$.

We now analyze the computational complexity of Algorithm \ref{merge}. Initializing the merged sketch $\mathcal{S}_m$ requires $O(1)$ constant time in the worst case. Scanning $\mathcal{S}_1$ and $\mathcal{S}_2$ requires in the worst case $O(m)$ time. Indeed, there are $m$ buckets in each of the input sketches, and for each one we execute $O(1)$ operations, taking into account that searching for corresponding buckets is done through an hash table. Finally, the \textsc{UniformCollapse()} operation requires at most $O(m)$ time in the worst case (again, we just need to scan at most $m$ buckets). Taking into account that $m = O(1)$, overall the worst case computational complexity of Algorithm \ref{merge} is $O(1)$.

The computational complexity of the parallel \textsc{UDDSketch} algorithm is therefore $O(\frac{n}{p} + \log{p})$ since each process $p_i$ spends $O(\frac{n}{p})$ to insert its share of the input items in its sketch, and the parallel reduction requires $O(\log{p})$ (there are $\log{p}$ steps, each one costing $O(1)$). Finally, we remark here that Algorithm \ref{merge} can also be used in a distributed setting.

\begin{algorithm}
  \caption{Merge($\mathcal{S}_1, \mathcal{S}_2$)}
  \label{merge}
  \begin{algorithmic}
  \Require {$\mathcal{S}_1 = \lbrace B^1_i \rbrace_i, \mathcal{S}_2 = \lbrace B^2_j \rbrace_j$: sketches to be merged}
  \Ensure $\mathcal{S}_m \leftarrow \lbrace B^m_k \rbrace_k$: merged sketch
  
  \State \Call{Init}{$\mathcal{S}_m$}
  \ForAll{ $\lbrace i: B^1_i > 0 \lor B^2_i > 0 \rbrace$ }
  	\State $B^m_i \leftarrow B^1_i + B^2_i$
 \EndFor
 
\If{$\mathcal{S}_m.size > m$}
	\State \Call{UniformCollapse}{$\mathcal{S}_m$}
\EndIf

\State \Return $\mathcal{S}_m$
  
\end{algorithmic}
\end{algorithm}

\section{Correctness}
\label{correctness}

In this Section we formally prove that our parallel  \textsc{UDDSketch} algorithm is correct when executed on $p$ processors (or cores). We need the following definition.

\begin{defn}
\label{multiset}
A multiset $\mathcal{N}=(N, f)$ is a pair where $N$ is some set, called the underlying set of $\mathcal{N}$, and $f: N \rightarrow \mathbb{N}$ is a function. The generalized indicator function of $\mathcal{N}$ is 

\begin{equation}
\label{eq01}
I_\mathcal{N} (x) := \left\{ {\begin{array}{*{20}c}
   f(x) & {x \in N} , \\
   0 & {x \notin N},  \\
 \end{array} }\right.
 \end{equation}
 
\noindent where the integer--valued function $f$, for each $x \in N$, provides its \textit{multiplicity}, i.e., the number of occurrences of $x$ in $\mathcal{N}$. The cardinality of $\mathcal{N}$ is expressed by

\begin{equation}
\label{eq02}
\left\vert{\mathcal{N}}\right\vert := Card(\mathcal{N}) = \sum\limits_{x \in N} {I_\mathcal{N} (x)},
\end{equation}

\noindent whilst the cardinality of the underlying set $N$ is

\begin{equation}
\label{eq03}
\left\vert{N}\right\vert := Card(N) = \sum\limits_{x \in N} {1}.
\end{equation}
\end{defn}

\noindent A multiset (also called a \emph{bag}) essentially is a set where the duplication of elements is allowed. We also need the definition of the \textit{sum} operation \cite{Syropoulos} for multisets.

\begin{defn}
Let $\mathcal{A} = (A, f)$ and $\mathcal{B} = (B, g)$ be two multisets. The sum of $\mathcal{A}$ and $\mathcal{B}$ is the multiset whose underlying set is the union of the underlying sets and whose multiplicity function is the sum of the multiplicity functions: $\mathcal{A} \uplus \mathcal{B} = ((A \cup B), f + g)$.
\end{defn}

In the sequel, $\mathcal{N}$ will play the role of a finite
input dataset, containing $n$ items. We partition the original dataset $\mathcal{N}$, considered as a multiset, in $p$ datasets $\mathcal{N}_i$ $(i=0,\ldots,p-1)$, namely $\mathcal{N}=\biguplus_i \mathcal{N}_i$. 
Let the dataset $\mathcal{N}_i$ be assigned to the processor $p_i$, whose rank is denoted by $id$, with $id=0,\ldots, p-1$. Let also $\left\vert{\mathcal{N}_i}\right\vert$ denote the cardinality of $\mathcal{N}_i$, with $\sum_i \left\vert{\mathcal{N}_i}\right\vert=\left\vert{\mathcal{N}}\right\vert=n $.

The first step of the algorithm consists in the execution of the sequential 	\textsc{UDDSketch} algorithm (which has already been proved to be correct) on the dataset assigned to each processor $p_{i}$. Therefore, in order to prove the overall correctness of the algorithm, we just need to demonstrate that the parallel reduction is correct.

Our strategy is to prove that if a single sub-step of the parallel reduction is correct (i.e., Algorithm \ref{merge}), then we can naturally extend the proof to the $O(\log~p)$ steps of the whole parallel reduction. We begin by proving the following Lemma, which states that UDDSketch is permutation invariant with regard to insertion-only streams.

\begin{lemma}
\label{invariance}
    UDDSketch is permutation invariant with regard to insertion-only streams, i.e., it produces the same sketch regardless of the order in which the input items are inserted.
\end{lemma}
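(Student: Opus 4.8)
The plan is to isolate the single source of order dependence and dispose of everything else as commutative bookkeeping. Inserting a value $x$ first sets its bucket index to $i=\lceil\log_\gamma x\rceil$, a quantity depending on $x$ and $\gamma$ alone, and then increments a counter. Since counter increments are commutative and associative, the \emph{binning} of the stream --- the assignment of multiplicities to initial indices --- is manifestly independent of order. Hence the lemma reduces entirely to showing that the uniform collapses, the only state-modifying steps whose triggering depends on the running configuration, introduce no order dependence, neither in \emph{how many} collapses are performed nor in \emph{what} each one produces. I would also note up front that the insertion-only hypothesis is essential: with deletions a bucket can empty and be discarded, which would break the monotonicity used below.

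First I would record the algebraic fact that makes the \emph{content} of a collapse order-free. A single \textsc{UniformCollapse} sends index $i$ to $\lceil i/2\rceil$, so $c$ successive collapses send $i$ to $\lceil i/2^{c}\rceil$ by the standard identity $\lceil\lceil a\rceil/n\rceil=\lceil a/n\rceil$ for real $a$ and positive integer $n$. The same identity shows that a collapse commutes with insertion: inserting $x$ and then collapsing lands $x$ in the same final bucket as collapsing first and then inserting $x$ at the coarser resolution $\gamma^{2}$, because $\lceil\log_{\gamma^{2}}x\rceil=\lceil(\log_\gamma x)/2\rceil=\lceil i/2\rceil$. Consequently, for a \emph{fixed} number $c$ of collapses the resulting sketch is completely determined by the input multiset: bucket $j$ holds exactly those items $x\in\mathcal{N}$ with $\lceil i/2^{c}\rceil=j$, counted with multiplicity, regardless of how the insertions and collapses were interleaved.

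The hard part, and the real content of the proof, is to show that the \emph{number} of collapses is itself order-independent, since a collapse fires whenever the running bucket count exceeds $m$ and that count plainly depends on the order. Here I would introduce, for a sub-stream $P$, the quantity $g_P(c):=\lvert\{\lceil i/2^{c}\rceil : x\in P\}\rvert$, the number of distinct buckets of $P$ at collapse level $c$, and set $c^{*}(P):=\min\{c\ge 0: g_P(c)\le m\}$. Two facts then pin down the collapse count. Monotonicity, $g_P(c)\le g_{P'}(c)$ whenever $P\subseteq P'$ (more items can only create more buckets), gives the upper bound: the structure never collapses past level $c^{*}(\mathcal{N})$, because at that level even the full stream occupies $\le m$ buckets, so the trigger can never fire once that level is reached. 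The lower bound is forced by the size invariant: if after all insertions the structure sat below level $c^{*}(\mathcal{N})$ it would hold $g_{\mathcal{N}}(c)>m$ buckets, contradicting the guarantee that the algorithm restores $\le m$ buckets before terminating. I would make this precise by induction on the processed prefix $P$, maintaining the invariant that after \emph{any} ordering of $P$ the structure sits at level exactly $c^{*}(P)$ with the counts described above; the inductive step uses monotonicity both to check that a newly created bucket can raise the level only up to $c^{*}(P')$ and that it can never drive the level below $c^{*}$. A minor technical point to handle in that step is that a single \textsc{UniformCollapse} need not by itself restore the bound when the occupied indices are sparse, so the collapse must be iterated until the size drops to $\le m$; iteration stops exactly at level $c^{*}(P')$. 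Once the collapse count is fixed at $c^{*}(\mathcal{N})$, the determinacy established in the previous paragraph shows the final buckets and their counts are a function of $\mathcal{N}$, $\gamma$ and $m$ only, which is precisely permutation invariance.
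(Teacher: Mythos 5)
Your proof is correct, and its core idea coincides with the paper's: characterize the sketch after $c$ collapses as the pushforward of the input multiset under the coarsened mapping $x\mapsto\lceil i/2^{c}\rceil$ with $i=\lceil\log_\gamma x\rceil$ (equivalently, the mapping $i_{\gamma^{2^c}}$ obtained by repeatedly squaring $\gamma$), so that for a fixed collapse level the sketch depends only on the multiset. Where you genuinely diverge is in how much weight you put on the remaining issue. The paper's proof essentially asserts that UDDSketch "squares $\gamma$ and reconstructs the sketch until $\lvert i_\gamma(\Delta)\rvert\le m$," i.e., it treats the final $\gamma$ as the minimal value satisfying the size constraint on the whole dataset, without arguing that the incremental, prefix-triggered collapses actually land on that value irrespective of the insertion order. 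You correctly identify this as the hard part and close it: the monotone quantities $g_P(c)$ and the level $c^{*}(P)=\min\{c: g_P(c)\le m\}$, together with the induction maintaining the invariant that after any ordering of a prefix $P$ the structure sits at level exactly $c^{*}(P)$, give a clean proof that the collapse count is order-independent (the key observations being $c^{*}(P)\le c^{*}(P')$ for $P\subseteq P'$ and the fact that the iterated collapse stops at the first admissible level at or above the current one, which is $c^{*}(P')$). Your version is therefore somewhat more complete than the paper's on this point; your remark that the insertion-only hypothesis is what keeps $g_P$ monotone under prefix growth also makes explicit why the lemma fails once deletions are allowed, which the paper only notes after Theorem~\ref{merge-correctness}. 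One small caveat: your argument implicitly assumes the iterated collapse always terminates, i.e., that $g_P(c)\le m$ for $c$ large enough; this holds because $\lceil i/2^{c}\rceil$ eventually takes at most two values, but it deserves a sentence.
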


\begin{proof}
    Let $\mathcal{D} = (\Delta, \mu)$ be a multiset representing an insertion-only input stream (i.e., deleting an item is not allowed). $\Delta \subset \mathbb{R^+}$ is the underlying set of $\mathcal{D}$ and $\mu: \Delta \rightarrow \mathbb{N_{\geq 1}}$ is its multiplicity function. Let $i_{\gamma}: \Delta \rightarrow \mathbb{Z}: i_{\gamma}(x) =  \lceil \log_{\gamma} x \rceil$ denote the function which maps each item $x \in \Delta$ to the corresponding bucket in the sketch built by \textsc{UDDSketch} processing $\mathcal{D}$ and assume that the sketch can grow unbounded. Then $i_\gamma(\Delta)$, the image of $\Delta$ through the mapping function $i_\gamma$, corresponds to the set of bucket keys in the sketch summarizing the multiset $\mathcal{D}$ with a guaranteed accuracy of $\alpha = \frac{\gamma - 1}{\gamma + 1}$ and $\left|i_\gamma(\Delta)\right|$ is the number of such buckets, i.e., the size of the sketch. 
    
    Moreover, for each bucket key $k \in i_\gamma(\Delta)$, the preimage of $k$ under $i_\gamma$, denoted by $i^{-1}_\gamma(k)$, is the set of items assigned to the bucket $B_k$, and we can compute the value of a bucket $B_k$ as the sum of the multiplicities of its items in the input dataset, i.e., $B_k = \sum_{x \in i^{-1}_\gamma(k)} \mu(x)$. 
    
    Therefore, the sketch computed by UDDSketch on a dataset $\mathcal{D}$ is completely determined by the sets $i_\gamma(\Delta)$ and $i^{-1}_\gamma(k) \forall k \in i_\gamma(\Delta)$ which do not depend on the order in which the items in $\mathcal{D}$ are processed. We can represent the sketch produced by UDDSketch for the dataset $\mathcal{D}$ as the multiset $\mathcal{S} = (i_\gamma(\Delta), \beta)$, where $\beta: \Sigma \rightarrow \mathbb{N_{\geq 1}}: \beta(k) = \sum_{x \in i^{-1}_\gamma(k)} \mu(x)$.

    When the sketch is allowed to grow unbounded, the value of $\gamma$ and consequently the accuracy of the sketch is not constrained; it can be set arbitrarily and is not modified by UDDSketch. On the contrary, when a limit to the number of buckets is imposed, UDDSketch must determine the value of $\gamma$ that allows respecting that limit, i.e., the value of $\gamma$ also becomes an output of the algorithm. 

    In fact, the collapsing procedure of UDDSketch is equivalent to a change of the value of $\gamma$, which is squared in each collapse operation, and a sketch reconstruction through the  mapping function using the new $\gamma$ value. When a limit of $m$ buckets is imposed to the size of the sketch and that limit is exceeded with the current value of $\gamma$, UUDSketch squares that value and reconstructs the sketch until the constraint $\left|i_\gamma(\Delta)\right| \leq m$ is satisfied. 
    
    The characterization of the sketch as the multiset $(i_\gamma(\Delta), \beta)$ continues to hold even if collapsing operations are executed with $\gamma$ set to the value needed to respect the sketch size constraint, and the sketch remains invariant with regard to the order in which the items are processed or the order in which the collapsing operations are executed, thus proving that UDDSketch is permutation invariant when processing insertion-only streams.
\end{proof}

We consider now a single step of the parallel reduction, i.e., the case when the input dataset, represented by a multiset $\mathcal{D}$, is partitioned into the multisets $\mathcal{D}_1$ and $\mathcal{D}_2$, so that $\mathcal{D} = \mathcal{D}_1 \biguplus \mathcal{D}_2$, where $\biguplus$ represents the \textit{sum} operation \cite{Syropoulos}. We independently process  $\mathcal{D}_1$ and $\mathcal{D}_2$ with two instances of UDDSketch initialized with the same initial value of the parameter $\gamma$ and the same limit to the number of buckets. 
 
Without loss of generality, we assume that the final values of $\gamma$ for the two sketches are the same. In fact, we prove here that this is not restrictive. Setting the same initial conditions, the sequence of values that $\gamma$ can assume due to collapses of the two sketches is the same, i.e., $\gamma \in \{\gamma_0, \gamma^2_0, \gamma^4_0, \gamma^8_0 \ldots\}$ holds for both the sketches. If the final values of $\gamma$ do not match, we can always repeatedly collapse the sketch with smaller $\gamma$ until it matches the $\gamma$ of the other sketch. We shall show that the following Theorem holds.

\begin{theorem}
\label{merge-correctness}
    Let $\mathcal{D}_1 = (\Delta_1, \mu_1)$ and $\mathcal{D}_2 = (\Delta_2, \mu_2)$ be two multisets and $\mathcal{S}_1$ and $\mathcal{S}_2$ the sketches produced by UDDSketch respectively processing $\mathcal{D}_1$ and $\mathcal{D}_2$ with a limit to the number of buckets, $m$, and an initial value of $\gamma = \gamma_0$. Denote by $\mathcal{S}_m$ the sketch obtained by merging $\mathcal{S}_1$ and $\mathcal{S}_2$ on the basis of the UDDSketch merge procedure and denote by $\mathcal{S}_g$ the sketch that UDDSketch would produce on the multiset $\mathcal{D} = (\Delta, \mu) = \mathcal{D}_1 \biguplus \mathcal{D}_2$ with the same size limit $m$ and the same initial value of $\gamma = \gamma_0$. Then, $\mathcal{S}_g = \mathcal{S}_m$.  
\end{theorem}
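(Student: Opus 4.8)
The plan is to reduce the whole statement to the multiset characterization of a sketch established in Lemma~\ref{invariance}: a sketch built by UDDSketch with final parameter $\gamma$ on a multiset $(\Delta,\mu)$ is, independently of its processing history, exactly the pair $(i_\gamma(\Delta),\beta)$ with $\beta(k)=\sum_{x\in i^{-1}_\gamma(k)}\mu(x)$. The strategy is to exhibit both $\mathcal{S}_m$ and $\mathcal{S}_g$ as this same object for $\mathcal{D}=(\Delta,\mu)=\mathcal{D}_1\biguplus\mathcal{D}_2$ at one common final value of $\gamma$. I would first invoke the fact, recalled in the proof of Lemma~\ref{invariance}, that a collapse is equivalent to squaring $\gamma$ and rebuilding; hence, after the $\gamma$-alignment justified in the paragraph preceding the theorem, the independently built sketches are precisely $\mathcal{S}_1=(i_{\gamma_*}(\Delta_1),\beta_1)$ and $\mathcal{S}_2=(i_{\gamma_*}(\Delta_2),\beta_2)$ for their common final value $\gamma_*=\gamma_0^{2^{\max(c_1,c_2)}}$, where $c_1,c_2$ are the numbers of collapses the two parts individually require.

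Second, I would analyse the merge loop of Algorithm~\ref{merge} itself. Because the index map acts pointwise on underlying sets, $i_{\gamma_*}(\Delta_1)\cup i_{\gamma_*}(\Delta_2)=i_{\gamma_*}(\Delta_1\cup\Delta_2)=i_{\gamma_*}(\Delta)$, and because the loop adds counters keywise, for every key $k$ one has $\beta_1(k)+\beta_2(k)=\sum_{x\in i^{-1}_{\gamma_*}(k)}\mu_1(x)+\sum_{x\in i^{-1}_{\gamma_*}(k)}\mu_2(x)=\sum_{x\in i^{-1}_{\gamma_*}(k)}\mu(x)$. Thus the sketch obtained right after the loop, before the size check, is exactly the unbounded $\gamma_*$-sketch $(i_{\gamma_*}(\Delta),\beta)$ of $\mathcal{D}$. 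If it exceeds $m$ buckets, \textsc{UniformCollapse} is then applied, which is again repeated squaring of $\gamma_*$ with reconstruction until at most $m$ buckets remain; call the value reached $\gamma_m$.

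The crux is to show $\gamma_m=\gamma_g$. The direct-processing value $\gamma_g$ is the smallest element of the chain $\{\gamma_0,\gamma_0^2,\gamma_0^4,\dots\}$ whose induced bucket set on $\Delta$ has size at most $m$, a quantity depending only on $\mathcal{D}$, $\gamma_0$ and $m$. For the merge I must check that $\gamma_*$ lies on the same chain and does not overshoot $\gamma_g$, and that the subsequent collapsing stops at the same point. The first follows from monotonicity: for every $\gamma$ we have $i_\gamma(\Delta)=i_\gamma(\Delta_1)\cup i_\gamma(\Delta_2)$, so $|i_\gamma(\Delta)|\ge\max(|i_\gamma(\Delta_1)|,|i_\gamma(\Delta_2)|)$, whence the union needs at least as many collapses as either part and $\gamma_*\le\gamma_g$ with $\gamma_*=\gamma_0^{2^{\max(c_1,c_2)}}$. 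Since a uniform collapse maps the key set surjectively onto a smaller one, $|i_{\gamma_0^{2^e}}(\Delta)|$ is non-increasing in $e$; the squaring chain starting at $\gamma_*$ is therefore a suffix of the chain starting at $\gamma_0$, and applying the identical stopping rule ``at most $m$ buckets on $\Delta$'' lands exactly on $\gamma_g$, i.e. $\gamma_m=\gamma_g$.

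Writing $\gamma_f=\gamma_m=\gamma_g$, both sketches then equal $(i_{\gamma_f}(\Delta),\beta)$ by the characterization, which yields $\mathcal{S}_g=\mathcal{S}_m$. I expect the third step to be the main obstacle, since it is the only place where the three independent size-limit enforcements (on $\mathcal{S}_1$, on $\mathcal{S}_2$, and on the merged sketch) could diverge from the single enforcement performed when $\mathcal{D}$ is processed directly; the monotonicity inequality $|i_\gamma(\Delta_i)|\le|i_\gamma(\Delta)|$ together with the nested structure of the admissible $\gamma$ values is precisely what forces the two collapse schedules to agree.
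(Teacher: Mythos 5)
Your proof is correct, but it takes a genuinely different route from the paper's. The paper argues by \emph{simulation}: it exhibits the merge of $\mathcal{S}_1$ and $\mathcal{S}_2$ as the result of processing $\mathcal{D}$ sequentially in one particular order (all of $\mathcal{D}_1$ first, then the items of $\mathcal{D}_2$ landing in existing buckets, then the rest, deferring collapses), obtaining an intermediate sketch $\mathcal{T}$ with $\mathcal{T}=\mathcal{S}_m$, and then invokes the permutation invariance of Lemma~\ref{invariance} to conclude $\mathcal{T}=\mathcal{S}_g$. You instead argue by \emph{canonical form}: you exhibit both $\mathcal{S}_m$ and $\mathcal{S}_g$ as the same object $(i_{\gamma_f}(\Delta),\beta)$, reducing everything to the identity $i_\gamma(\Delta_1)\cup i_\gamma(\Delta_2)=i_\gamma(\Delta)$, the keywise additivity of $\beta$ under $\mu=\mu_1+\mu_2$, and an explicit proof that the final $\gamma$ values coincide. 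The main thing your version buys is precisely that last step: the inequality $|i_\gamma(\Delta)|\ge\max(|i_\gamma(\Delta_1)|,|i_\gamma(\Delta_2)|)$, which gives $\gamma_*\le\gamma_g$, together with the monotonicity of $|i_{\gamma_0^{2^e}}(\Delta)|$ in $e$, pins down why the three independent size-limit enforcements cannot overshoot and why the post-merge collapsing stops exactly at $\gamma_g$. The paper handles this point only implicitly (``subject to the same number of collapses \dots\ thus the equality is maintained''), delegating it to Lemma~\ref{invariance}. Conversely, the paper's simulation argument is closer to the actual code path of Algorithm~\ref{merge} and needs no separate characterization of $\gamma_g$ as the minimal admissible element of the chain $\{\gamma_0,\gamma_0^2,\gamma_0^4,\dots\}$ --- a fact you assert and which does hold (the last collapse in direct processing is triggered by a prefix of $\Delta$, and a prefix never has more buckets than $\Delta$ itself), but which deserves the one-line justification if you want your write-up to be self-contained.
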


\begin{proof}
    We shall prove that separately computing $\mathcal{S}_1$ and $\mathcal{S}_2$ and then merging them in order to obtain $\mathcal{S}_m$, results in the same sequence of operations related to sequentially processing through UDDSketch all of the items in $\mathcal{D}$, but in a particular order. 

    Without loss of generality, we assume that the final value of $\gamma$ for $\mathcal{S}_1$ is larger than that for $\mathcal{S}_2$, the other case being symmetric. 
    
    To make it possible merging $\mathcal{S}_1$ and $\mathcal{S}_2$, we need to repeatedly collapse $\mathcal{S}_2$ until its $\gamma$ value (and consequently its mapping function) matches the one of $\mathcal{S}_1$. After this preliminary operation, all of the items available both in $\mathcal{D}_1$ and $\mathcal{D}_2$ turn out to be processed by the same mapping function although by two separate sketches. This also means that buckets with the same key in the two sketches have the same boundaries. 
    
    Denote by $\mathcal{T}$ the sketch computed by sequentially processing $\mathcal{D}$. We start the sequential procedure by first inserting in $\mathcal{T}$ all of the items in $\mathcal{D}_1$. Therefore, at the end, it holds that $\mathcal{T} = \mathcal{S}_1$. Then, we continue to insert in $\mathcal{T}$ all of the items in $\mathcal{D}_2$ that fall in buckets already present in $\mathcal{T}$. This produces the same result that we obtain in the merging procedure, when we set $\mathcal{S}_m = \mathcal{S}_1$ and increment the count of each bucket in $\mathcal{S}_m$ with the count of the bucket with the same key in the sketch $\mathcal{S}_2$, if it exists. Now, we continue to insert in $\mathcal{T}$ all of the remaining items of $\mathcal{D}_2$, which leads to the creation of new buckets in $\mathcal{T}$, but we do not collapse the sketch for now. This corresponds to adding to the sketch $\mathcal{S}_m$ all of the buckets in $\mathcal{S}_2$ with keys that are not yet in $\mathcal{S}_m$ and this concludes the first step of the merging procedure. Up to this point, consisting of the same operations, the sequential procedure on $\mathcal{D}$ and the merging procedure on $\mathcal{S}_1$ and $\mathcal{S}_2$ produce two identical sketches, $\mathcal{T} = \mathcal{S}_m$. The second step of the merging procedure consists of collapsing $\mathcal{S}_m$ until the constraint on the sketch size, $m$, is satisfied, but this constraint also holds for $\mathcal{T}$, which is  subject to the same number of collapses. Thus, the equality is maintained.

    $\mathcal{T}$ is the sketch that we obtain processing through UDDSketch the dataset $\mathcal{D}$ in a particular order of insertions and collapses, but we know from Lemma \ref{invariance} that the order of insertions and collapses is not relevant, therefore we can conclude that $T = \mathcal{S}_g$ which finally proves the thesis $\mathcal{S}_m = \mathcal{S}_g$. 


     






    
    


\end{proof}

Lemma \ref{invariance} and Theorem \ref{merge-correctness} hold for insertion-only input streams. When the input stream also includes deletions, the permutation invariance of UDDSketch and consequently the  equality between the two sketches $\mathcal{S}_m$ and $\mathcal{S}_g$ can not be guaranteed. Anyway, the following Theorem, holds even when deletions are allowed.

\begin{theorem}
\label{final-bound}
    Let $\sigma_1$ and $\sigma_2$ be two streams including insertions and deletions of items drawn from the universe set $U = [x_{min}, x_{max}] \subset \mathbb{R}^+$  and $\mathcal{S}_1$ and $\mathcal{S}_2$ be the sketches produced by UDDSketch processing respectively $\sigma_1$ and $\sigma_2$ with the sketch size limited to $m$ buckets, and an initial value of $\gamma = \gamma_0$. Denote by $\mathcal{S}_m$ the sketch obtained by merging $\mathcal{S}_1$ and $\mathcal{S}_2$ on the basis of the UDDSketch merge procedure and denote by $\mathcal{S}_g$ the sketch that UDDSketch would produce on the stream $\sigma = \sigma_1 \biguplus \sigma_2$ with the sketch size limited to the same number of buckets, $m$, and the same initial value of $\gamma = \gamma_0$. Then, $\mathcal{S}_g$ and $\mathcal{S}_m$ have the same error bound.
\end{theorem}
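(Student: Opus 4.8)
The plan is to abandon any attempt at proving an equality between $\mathcal{S}_m$ and $\mathcal{S}_g$ — which must fail once deletions are permitted, since permutation invariance (Lemma \ref{invariance}) no longer holds — and instead to argue directly about the accuracy guarantee. I would begin by recalling that the error bound attached to a UDDSketch is a function of its final parameter $\gamma$ alone, namely $\alpha = \frac{\gamma - 1}{\gamma + 1}$, and that this function is monotonically increasing in $\gamma$: the coarser the resolution, the larger the guaranteed relative error. Since $\gamma$ is only ever modified by the collapsing procedure, which squares it, the final value is always of the form $\gamma_0^{2^k}$, where $k$ is the number of collapses performed. Consequently the error bound of either sketch is completely determined by the number of collapses it undergoes, and it suffices to control this number for both $\mathcal{S}_m$ and $\mathcal{S}_g$ using only the shared parameters $U$, $m$ and $\gamma_0$.

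The central step is to exploit the hypothesis that all items are drawn from the bounded universe $U = [x_{min}, x_{max}]$. I would define $k^\ast$ as the smallest integer such that, at resolution $\gamma_0^{2^{k^\ast}}$, the whole universe maps into at most $m$ buckets, i.e.\ $\lceil \log_{\gamma_0^{2^{k^\ast}}} x_{max}\rceil - \lceil \log_{\gamma_0^{2^{k^\ast}}} x_{min}\rceil + 1 \leq m$. This quantity depends only on $x_{min}$, $x_{max}$, $m$ and $\gamma_0$, and is therefore identical for the two sketches. The key observation is that once $\gamma$ reaches $\gamma_0^{2^{k^\ast}}$ no further collapse can ever be triggered: any multiset of items drawn from $U$ occupies at most $m$ distinct buckets at that resolution, so the size constraint is automatically satisfied at every intermediate state, irrespective of the insertion/deletion order. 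Hence every UDDSketch that processes items from $U$ with these parameters — in particular $\mathcal{S}_1$, $\mathcal{S}_2$, their merge $\mathcal{S}_m$, and the ground-truth sketch $\mathcal{S}_g$ — terminates with a final parameter $\gamma \leq \gamma_0^{2^{k^\ast}}$.

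It then remains to check that the merge procedure itself cannot push the parameter beyond this common ceiling. I would verify this by tracking Algorithm \ref{merge}: each input sketch already satisfies $\gamma_i \leq \gamma_0^{2^{k^\ast}}$ by the previous paragraph, the two sketches are aligned at $\max(\gamma_1,\gamma_2)$, and the subsequent \textsc{UniformCollapse()} calls only raise $\gamma$ as far as is needed to meet the $m$-bucket limit, which by the definition of $k^\ast$ is never past $\gamma_0^{2^{k^\ast}}$. Both $\mathcal{S}_m$ and $\mathcal{S}_g$ thus guarantee the same worst-case relative accuracy $\alpha = \frac{\gamma_0^{2^{k^\ast}} - 1}{\gamma_0^{2^{k^\ast}} + 1}$, which establishes the theorem.

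The main obstacle, and the reason the deletion case is genuinely weaker than Theorem \ref{merge-correctness}, is precisely that one cannot equate the two final $\gamma$ values: with deletions the realized number of collapses is order-dependent, so $\mathcal{S}_m$ and $\mathcal{S}_g$ may differ as sketches and even carry different actual resolutions. The work therefore has to be shifted from proving sketch equality to proving that the guaranteed error bound is insensitive to this order dependence, and this is exactly what the boundedness of $U$ buys us: it caps the number of collapses at a value $k^\ast$ fixed by the common parameters, so that the \emph{a priori} accuracy guarantee coincides for the two sketches even though their internal states need not.
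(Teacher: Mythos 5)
Your proof is correct, but it reaches the conclusion by a genuinely different route than the paper. The paper's proof first reduces the deletion case to the insertion-only case: it observes that the largest possible final $\gamma$ is attained when all deletions are postponed to the end (deletions change counters but never trigger collapses), so the final $\gamma$ of each sketch is upper-bounded by the final $\gamma$ of its insertion-only counterpart; by Theorem \ref{merge-correctness} the insertion-only versions of $\mathcal{S}_m$ and $\mathcal{S}_g$ coincide, yielding a \emph{common} upper bound $\tilde\gamma$, which is then bounded via Theorem 3 of \cite{Epicoco2020147604} by $\left(\sqrt[m]{x_{max}/x_{min}}\right)^2$. You instead bypass the insertion-only reduction and the permutation-invariance machinery entirely, deriving a universal ceiling $\gamma_0^{2^{k^\ast}}$ directly from the boundedness of $U$: once the resolution is coarse enough that the whole universe fits in $m$ buckets, no state of any UDDSketch instance (nor of the merge) can trigger a further collapse, whatever the interleaving of insertions and deletions. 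Your argument is more self-contained (it re-derives the universe-based cap rather than citing it) and slightly more general, since it applies to any collection of sketches built from $U$ with the same $m$ and $\gamma_0$, not only to a pair whose union is $\sigma$. What the paper's route buys in exchange is a potentially tighter common bound: $\tilde\gamma$ is the resolution actually reached by the insertion-only run on the concrete data, which can be strictly smaller than the universe-derived ceiling when the data do not span all of $[x_{min}, x_{max}]$. Both arguments legitimately establish that $\mathcal{S}_m$ and $\mathcal{S}_g$ satisfy the same a priori accuracy guarantee, which is all the theorem claims.
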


\begin{proof}   
    The value of $\gamma$ during the execution of UDDSketch can only grow due to the collapses of the sketch and its final value depends on the order in which deletions are interleaved with insertions.
    The worst case scenario, when $\gamma$ reaches its largest value, happens when all of the deletions are postponed after all of the insertions. This particular order of insertions and deletions, in turn, produces a sketch with the same final value of $\gamma$ that one would obtain by processing only the insertions of the input stream and completely ignoring the deletions. In fact, deletions may change the bucket counters' values in a sketch, but not its $\gamma$ value. 

    On the other hand, an insertion-only stream falls in the hypothesis of Theorem \ref{merge-correctness}. Thus, if we consider only the insertions in $\sigma_1$, $\sigma_2$ and their concatenation $\sigma$, and ignore deletions, the two sketches $\mathcal{S}_m$ and $\mathcal{S}_g$ would be equal and have the same final value of $\gamma$. Denote by $\tilde{\gamma}$ this value, then, with regard to the original input streams with deletions, $\tilde{\gamma}$ is an upper bound on the values of $\gamma$ both for $\mathcal{S}_m$ and $\mathcal{S}_g$. 
    
    We know that the value of $\gamma$ for $\mathcal{S}_g$ is guaranteed as bounded by Theorem 3 of \cite{Epicoco2020147604}, i.e., $\gamma \leq \tilde{\gamma} \leq \left(\sqrt[m]{\frac{x_{max}}{x_{min}}}\right)^2$. Therefore, the guarantee on the accuracy of UDDSketch stated by Theorem 3 of \cite{Epicoco2020147604} continues to hold also for a sketch computed through the merge procedure.


\end{proof}

\begin{lemma}
\label{parallel-reduction}
    The parallel reduction in which the sketches $\mathcal{S}_1,\cdots, \mathcal{S}_p$ are processed on $p$ processors or cores of execution is correct.
\end{lemma}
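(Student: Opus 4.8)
The plan is to argue by induction on the number $p$ of input sketches, exploiting the fact that the parallel reduction is nothing more than a binary tree of pairwise \textsc{Merge} operations of height $O(\log p)$. The base case $p=1$ is trivial, since a reduction over a single sketch returns that sketch unchanged, which is the UDDSketch summary of $\mathcal{D}_1 = \mathcal{D}$; the first nontrivial instance $p=2$ is precisely Theorem \ref{merge-correctness}, which asserts that merging $\mathcal{S}_1$ and $\mathcal{S}_2$ yields exactly the sketch $\mathcal{S}_g$ that UDDSketch would produce on $\mathcal{D}_1 \biguplus \mathcal{D}_2$ under the same size limit $m$ and initial value $\gamma_0$.

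For the inductive step I would split the $p$ sketches into the two subtrees feeding the root of the reduction tree, say a left group indexed by $L$ and a right group indexed by $R$, with $L \cup R = \{1,\ldots,p\}$, $L \cap R = \emptyset$, and $|L|,|R| < p$. By the inductive hypothesis, the reduction over the left subtree returns a sketch equal to the one UDDSketch would produce on $\biguplus_{i \in L}\mathcal{D}_i$, and likewise the right subtree returns the UDDSketch summary of $\biguplus_{i \in R}\mathcal{D}_i$, both with size limit $m$ and initial value $\gamma_0$. The crucial observation is that, because Theorem \ref{merge-correctness} asserts an \emph{exact} equality $\mathcal{S}_m = \mathcal{S}_g$ rather than a mere approximation, each intermediate result is itself a genuine UDDSketch summary satisfying the hypotheses of Theorem \ref{merge-correctness}. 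Hence the final \textsc{Merge} at the root can be analysed by a second application of that theorem, producing a sketch equal to the one UDDSketch would produce on $(\biguplus_{i \in L}\mathcal{D}_i) \biguplus (\biguplus_{i \in R}\mathcal{D}_i)$. Since the multiset sum $\biguplus$ is associative and commutative, this equals $\biguplus_{i=1}^{p}\mathcal{D}_i = \mathcal{D}$, which closes the induction.

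The main obstacle is exactly this re-applicability of Theorem \ref{merge-correctness} at every internal node: the theorem is stated for two sketches produced \emph{directly} by UDDSketch on raw multisets, whereas at all levels above the leaves we merge sketches that are themselves outputs of earlier merges. The induction succeeds only because the equality in Theorem \ref{merge-correctness} lets us identify each merged sketch with a true UDDSketch summary of the corresponding combined multiset; in particular, the $\gamma$-reconciliation step performed inside \textsc{Merge} (collapsing the sketch with the smaller $\gamma$ until the two mapping functions coincide) behaves at every level exactly as it does at the leaves, so no new case analysis is needed as we climb the tree.

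Finally, I would record the regime of validity. The argument above establishes the exact equality $\mathcal{S}_m = \mathcal{S}_g$ only for insertion-only streams, since it rests on the permutation invariance of Lemma \ref{invariance} transmitted through Theorem \ref{merge-correctness}. When deletions are allowed, exact equality may fail, but Theorem \ref{final-bound} guarantees that a single merge preserves the UDDSketch error bound; the same induction then shows that the bound $\gamma \leq \tilde{\gamma} \leq \left(\sqrt[m]{\frac{x_{max}}{x_{min}}}\right)^2$ is inherited at every level of the reduction tree, so that the sketch returned by the full parallel reduction enjoys the same accuracy guarantee as the global sketch $\mathcal{S}_g$.
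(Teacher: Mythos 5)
Your proof is correct, and it reaches the same destination as the paper's, but the route is organized differently. The paper's proof does not perform an explicit induction on the reduction tree: after invoking Theorems \ref{merge-correctness} and \ref{final-bound} for a single merge step, it devotes most of its length to verifying that the multiset sum $\biguplus$ is commutative and associative and admits the empty multiset as identity, and then concludes that the merge procedure can therefore serve as a (commutative, associative) reduction operator whose evaluation order is immaterial. You instead carry out a structural induction over the binary reduction tree, and --- importantly --- you make explicit the one point the paper leaves implicit: Theorem \ref{merge-correctness} is stated for sketches produced \emph{directly} by UDDSketch on raw multisets, so its re-application at internal nodes of the tree is legitimate only because the theorem asserts the exact equality $\mathcal{S}_m = \mathcal{S}_g$, which lets every intermediate merge result be identified with a genuine UDDSketch summary of the corresponding combined multiset. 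This is precisely the mechanism by which associativity of $\biguplus$ on the data transfers to associativity of the merge on the sketches, and your version spells it out where the paper takes it for granted. Your closing remark on the regime of validity (exact equality only for insertion-only streams via Lemma \ref{invariance}, error-bound preservation under deletions via Theorem \ref{final-bound} inherited level by level) likewise matches the paper's intent while being slightly more careful about which guarantee survives at each level. In short: same ingredients, but your presentation is the more rigorous of the two on the re-applicability issue, while the paper's is the more economical, emphasizing the algebraic properties that justify arbitrary reduction orders (e.g., in MPI).
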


\begin{proof}
Consider a single step of the reduction, in which two sketches $\mathcal{S}_i$ and $\mathcal{S}_j$ are merged producing the sketch $\mathcal{S}_m$. By Theorem \ref{merge-correctness} and \ref{final-bound}, the sketch $\mathcal{S}_m$ is correct and subject to the same error bound of both $\mathcal{S}_i$ and $\mathcal{S}_j$. Now consider the whole reduction operation. Let $\mathcal{A} = (A, f_1)$, $\mathcal{B} = (B, f_2)$ and  $\mathcal{C} = (C, f_3)$. It can be easily shown, by reduction to the analogous properties holding in the ring of the integers, that the multiset sum operation has the following properties.

\begin{enumerate}
  \item Commutativity: $\mathcal{A} \biguplus \mathcal{B}=\mathcal{B} \biguplus \mathcal{A}$. 
  \item Associativity: $(\mathcal{A} \biguplus \mathcal{B}) \biguplus \mathcal{C}=\mathcal{A} \biguplus (\mathcal{B} \biguplus \mathcal{C})$;
  \item There exists a multiset, the null multiset $\epsilon = (\varnothing, g: x \rightarrow 0)$, such that $\mathcal{A} \biguplus \epsilon = \mathcal{A}$.
\end{enumerate}

Regarding commutativity, 

\begin{equation}
  \begin{aligned} \mathcal{A} \biguplus \mathcal{B} &=\left((A \cup B), f_{1}+f_{2}\right) \\ &=\left((A \cup B), f_{2}+f_{1}\right) \\ &=\mathcal{B} \biguplus \mathcal{A} \end{aligned}
\end{equation}

For associativity, 

\begin{equation}
  \begin{aligned} \mathcal{A} \biguplus \left(\mathcal{B}  \biguplus \mathcal{C}\right) &=\left((A \cup B \cup C), f_{1}+\left(f_{2}+f_{3}\right)\right) \\ &=\left((A \cup B \cup C),\left(f_{1}+f_{2}\right)+f_{3}\right) \\ &=\left(\mathcal{A} \biguplus \mathcal{B}\right) \biguplus \mathcal{C} \end{aligned}
\end{equation}

Finally, let $\epsilon = (\varnothing, g: x \rightarrow 0)$ be the empty multiset, i.e. the unique multiset with an empty underlying set; thus $Card(\epsilon) = 0$.

\begin{equation}
	\begin{aligned}
		\mathcal{A} \biguplus \epsilon &= ((A \cup \varnothing), f + g)\\
		&= (A, f)\\
		&= \mathcal{A}
	\end{aligned}
\end{equation}

Therefore, the merge procedure described for two multisets can be used, being associative, as a parallel reduction operator. Moreover, being also commutative, the order of evaluation must not be necessarily fixed (e.g., for non commutative user's defined operators in MPI is defined to be in ascending, process rank order, beginning with process zero) but can be changed, taking advantage of commutativity and associativity. Moreover, the final sketch obtained by the parallel reduction operator is also subject to the same error bound of the input sketches.
 
\end{proof}

\section{Experimental Results}
\label{experiments}

In this Section, we present and discuss the results of the experiments carried out for both \textsc{UDDSketch} and \textsc{DDSketch}. The aim is twofold: i) we aim at showing that the accuracy does not decrease when executing the algorithm in parallel; ii) the running time of \textsc{UDDSketch} is similar to that of \textsc{DDSketch}. 

Both algorithms have been implemented in C++. The tests have been executed on two supercomputers: Marconi100 (at CINECA, Italy) and Zeus (at Euro Mediterranean Center on Climate Change, Foundation, Italy). Marconi100 is made of 980 computing nodes equipped with 2 16-cores IBM Power9 processors, 256 GB of main memory and Mellanox Infiniband EDR DragonFly+; the code has been compiled with the PGI compiler pgc++ version 20.9-0 with optimization level O3. Zeus is a parallel cluster made of 384 computing node, each one equipped with 2 18-cores Intel Xeon Gold processors, 96 GB of main memory and Mellanox Infiniband EDR network; the code has been compiled with the Intel compiler icpc v19.0.5 with optimization level O3. The source code is freely available for inspection and reproducibility of results\footnote{https://github.com/cafaro/PUDDSKETCH}. 

\begin{table}
	\small
	\caption{Synthetic datasets}
	\label{synthetic-data}
	\centering
	\begin{tabular}{@{}llll@{}}
		\textbf{Dataset} & \textbf{Min value} & \textbf{Max value} & \textbf{Distribution}\\
		\hline
		beta & $3.04 \times 10^{-2}$ & $0.99$ & $\textit{Beta}(5, 1.5)$ \\
		exponential & $1.19\times10^{-7}$ & $34.9$ & $\textit{Exp}(3.5)$ \\
		lognormal & $1.08\times10^{-3}$ & $7.91\times10^{3}$ & $\textit{Lognormal}(1, 1.5)$ \\
		normal & $39.7$ & $60.5$ & $\textit{N}(10^6, 20000)$ \\
		uniform & $2.18\times10^{-3}$ & $2.49\times10^4$ & $\textit{Unif}(5, 10^6)$ \\
	\end{tabular}
\end{table}

The tests have been performed on $5$ synthetic datasets, whose properties are summarized in Table \ref{synthetic-data}. The experiments have been executed varying the number of parallel processes and measuring the execution time, the $q_0$-accuracy, the final value of $\alpha$ and the total number of collapses for both algorithms. We recall that for \textsc{UDDSketch} the $q_0$-accuracy is equal to $0$ by construction, and for \textsc{DDSketch} the final value of $\alpha$ is equal to its initial value. The stream length and the sketch size have been kept constant for every experiments as reported in Table \ref{parameters}. The results obtained on both parallel computers are totally equivalent and showed the same behaviours; for this reason, we report here only the results on Marconi100.

\begin{table}
	\small
	\caption{Experiments parameters}
	\label{parameters}
	\centering
	\begin{tabular}{@{}ll@{}}
		\textbf{Parameter} & \textbf{Set of values}\\
		\hline
		Number of procs. (M100)& $\{32, 64, 128, 256, 512\}$ \\
		Number of procs. (Zeus)& $\{36, 72, 144, 288, 576\}$ \\
		Stream Lenght (M100) & $16\cdot10^9$ \\
		Stream Lenght (Zeus) & $18\cdot10^9$ \\
		User $\alpha$ & $0.001$ \\
		Number of buckets & $512$ \\
	\end{tabular}
\end{table}

Fig. \ref{collapses} reports the total number of collapses for \textsc{DDSketch} and \textsc{UDDSketch}. As expected, we have that \textsc{DDSketch} performs a number of collapses which is about three order of magnitude greater than those performed by \textsc{UDDSketch}. Even though the running time for both a \textsc{DDSketch} and a \textsc{UDDSketch} collapse is $O(1)$, the asymptotic notation hides a bigger constant in the case of \textsc{UDDSketch}.

\begin{figure*}[h]
	\centering
	\begin{tabular}{cc}		
		\subfloat[]{
		\includegraphics[width=0.45\textwidth]{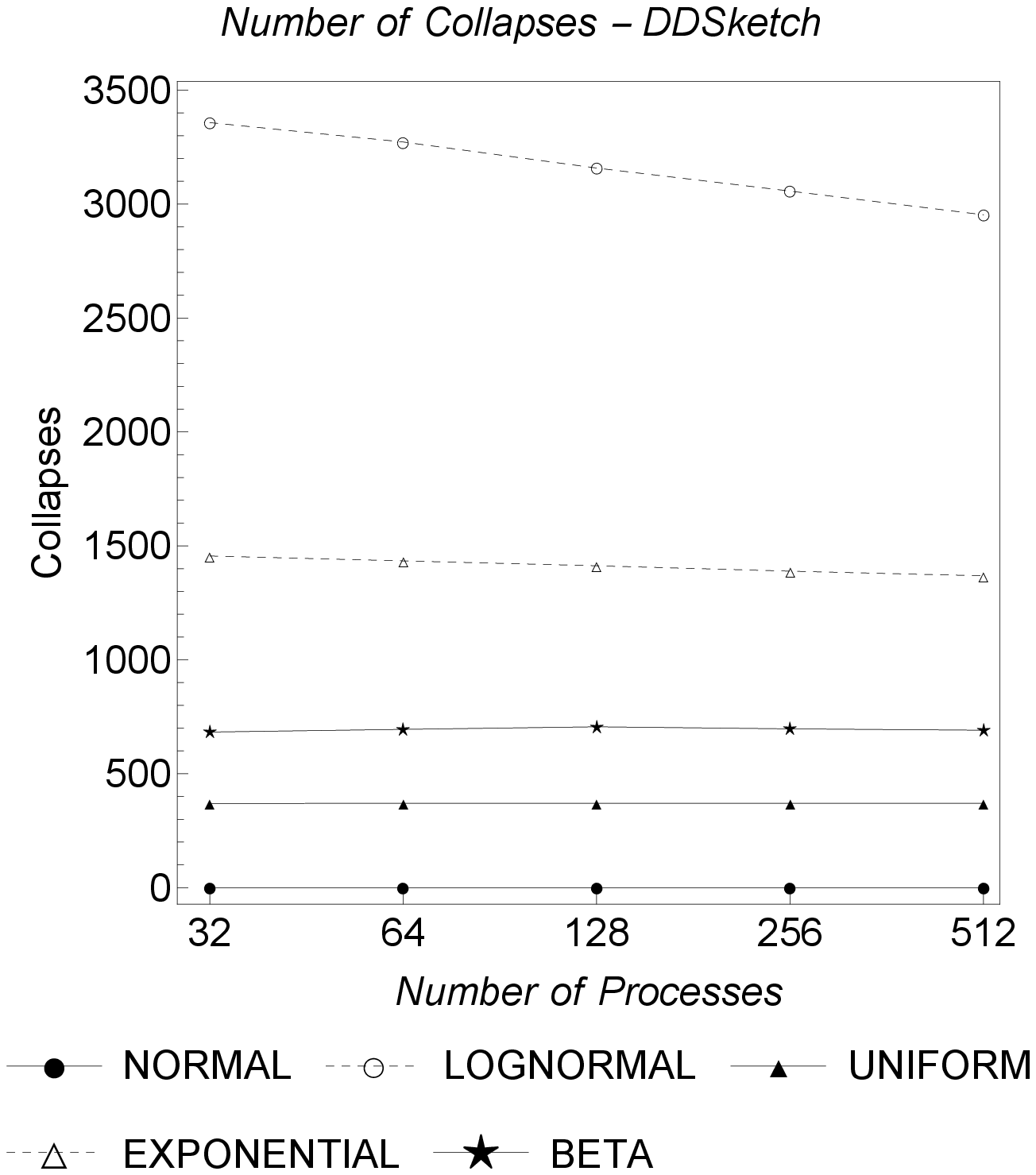}
		\label{collapses_ddog}
		} &
		
		\subfloat[]{
			\includegraphics[width=0.45\textwidth]{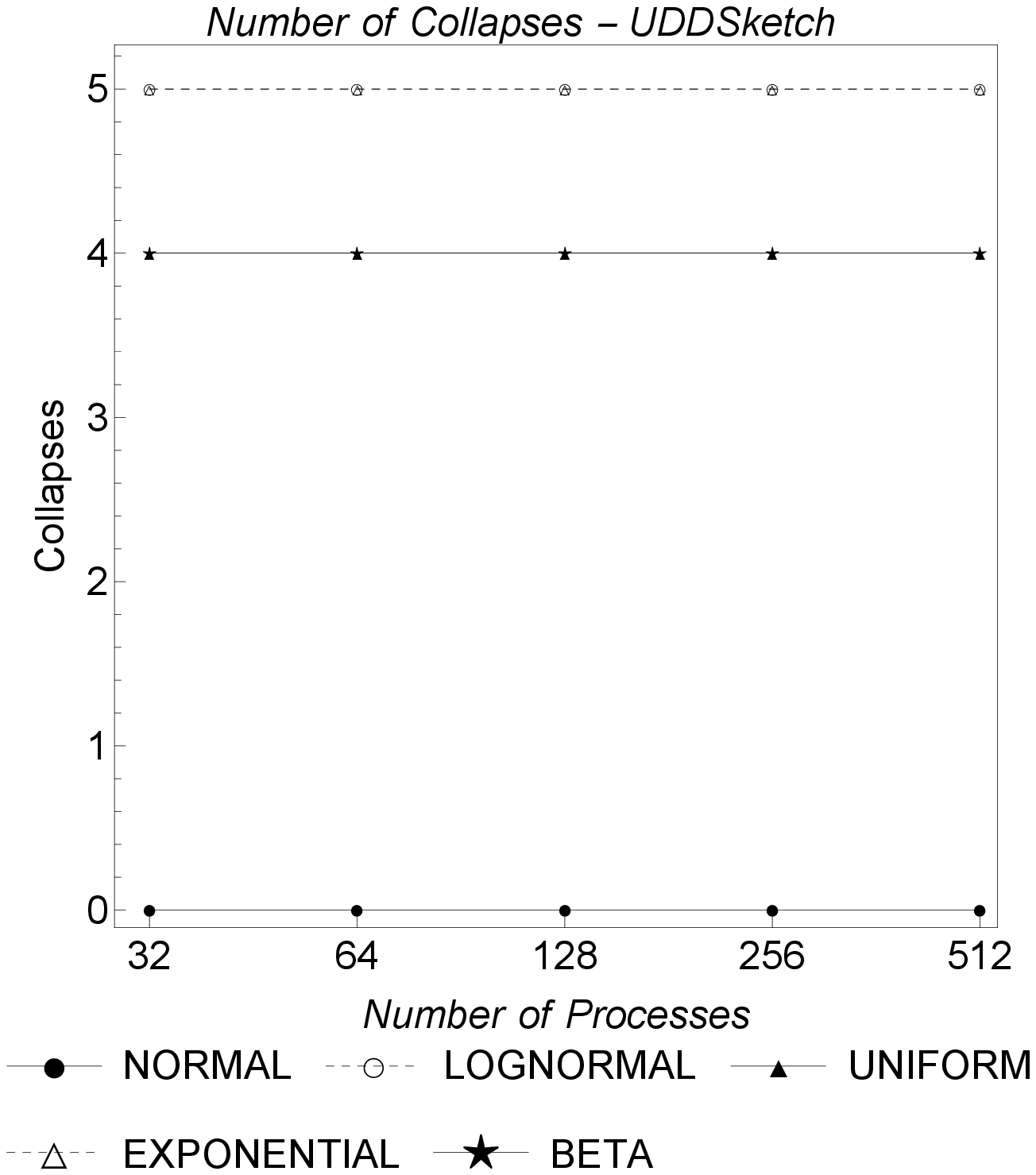}
			\label{collapses_udd}
		} 
	\end{tabular}
	
	\caption{Number of sketch collapses varying the input distribution.} 
	\label{collapses}
\end{figure*}

The parallel computation performance is shown in Fig. \ref{execution_time} in which we use log-log plots to represent the parallel running time of \textsc{DDSketch} and \textsc{UDDSketch} with different input distributions. The log-log plots give also a clear evidence of the parallel scalability of the algorithms: indeed, the ideal parallel speedup is represented by curve with slope equal to $-1$. The results clearly show that our \textsc{UDDSketch} algorithm provides a good parallel scalability and its parallel running time is equal to the \textsc{DDSketch};  only with the exponential distribution \textsc{UDDSketch} is slightly slower than the \textsc{DDSketch} (the difference in the execution time is less than $5\%$). 

Moreover, \textsc{UDDSketch} outperforms \textsc{DDSketch} with regard to the accuracy. Table \ref{accuracy} reports the $q_0$-accuracy and $\alpha$ value at the end of computation; as shown, \textsc{UDDSketch} has a $q_0$-accuracy equal to 0 for every distributions, which means that it can provide an accurate estimation for all of the quantiles, with a relative error less than $\alpha$; instead, \textsc{DDSketch} is accurate only for those quantiles greater than $q_0$ which, for some distributions like the exponential and the lognormal, is greater than $0.99$, demonstrating that the sketch size is not big enough to guarantee a quantile estimation with an error less than $\alpha=0.001$. The \textsc{UDDSketch} algorithm, instead, is self adaptive and consistently makes good use of the available space: for the exponential and the lognormal distributions it uses a greater value for $\alpha$ to guarantee a quantile estimation along all of the quantiles range with an error as small as possible using the sketch size defined by the user. Therefore, the results confirm that the parallel version of the \textsc{UDDSketch} algorithm outperforms \textsc{DDSketch} with regard to the accuracy, and that simultaneously it exhibits good parallel scalability and a running time comparable with \textsc{DDSketch}.    

\begin{figure*}[h]
	\centering
	\begin{tabular}{ccc}		
		\subfloat[]{
		\includegraphics[width=0.3\textwidth]{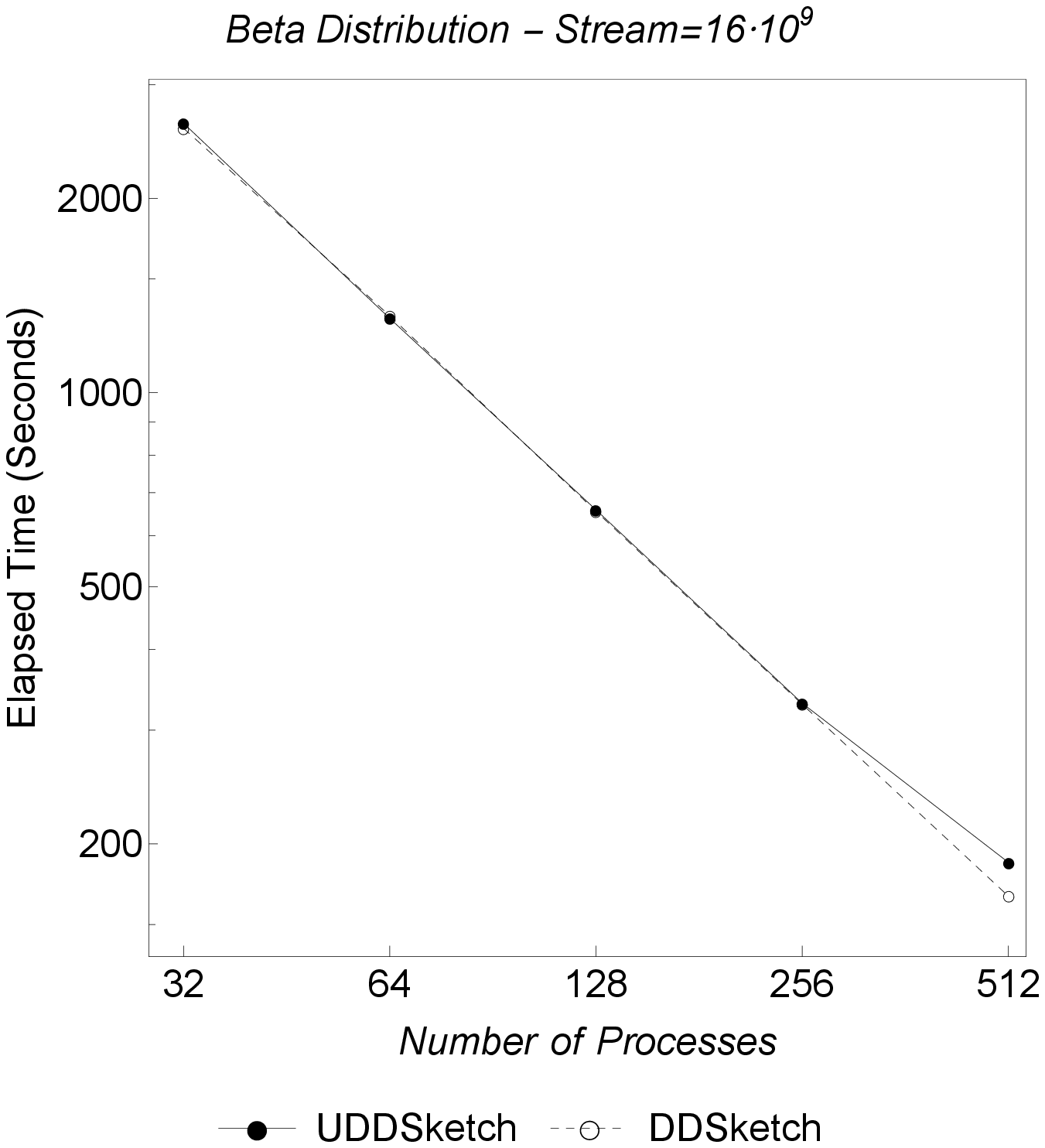}
		\label{beta}
		} &
		
		\subfloat[]{
			\includegraphics[width=0.3\textwidth]{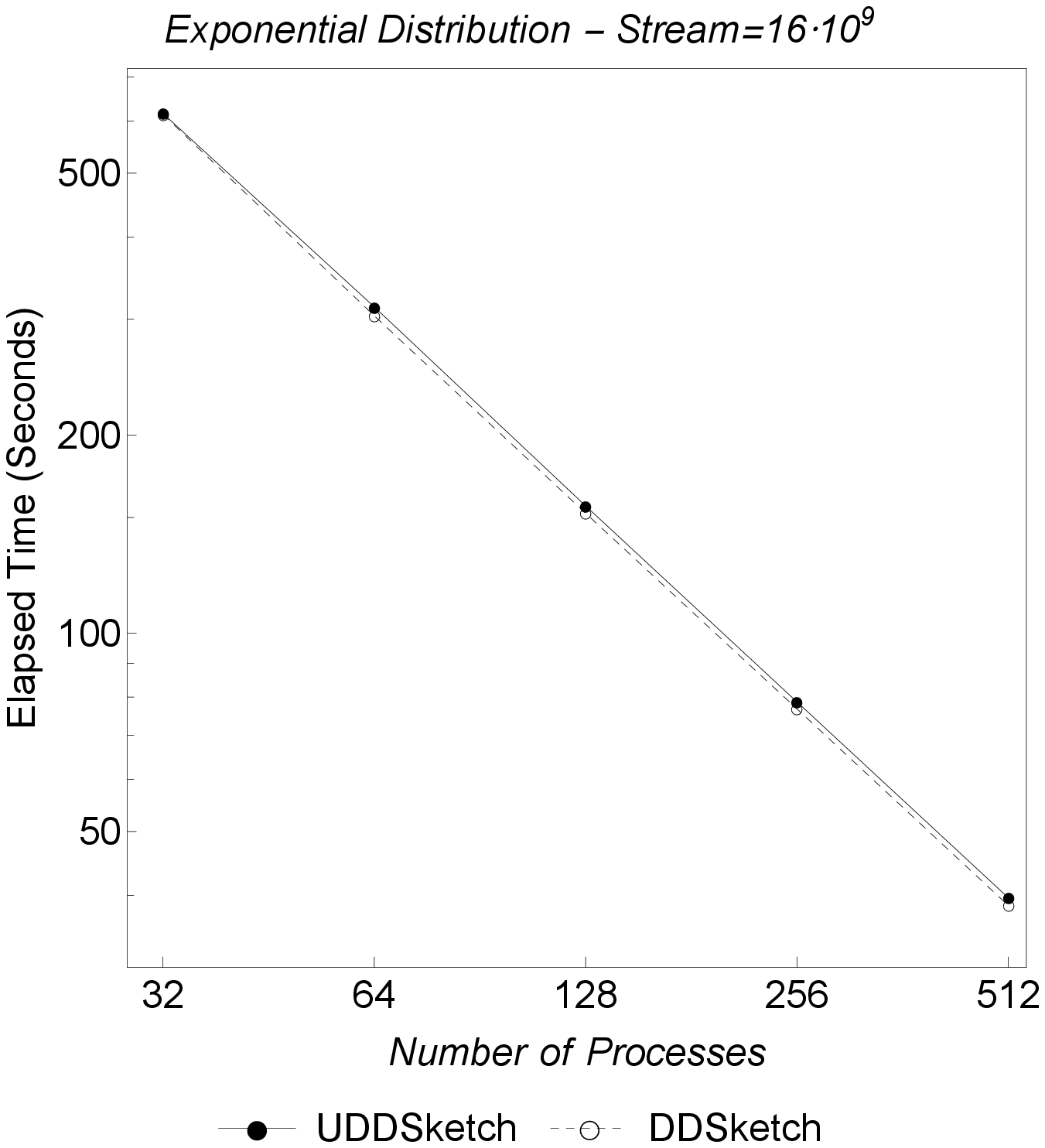}
			\label{exponential}
		} &
		\subfloat[]{
			\includegraphics[width=0.3\textwidth]{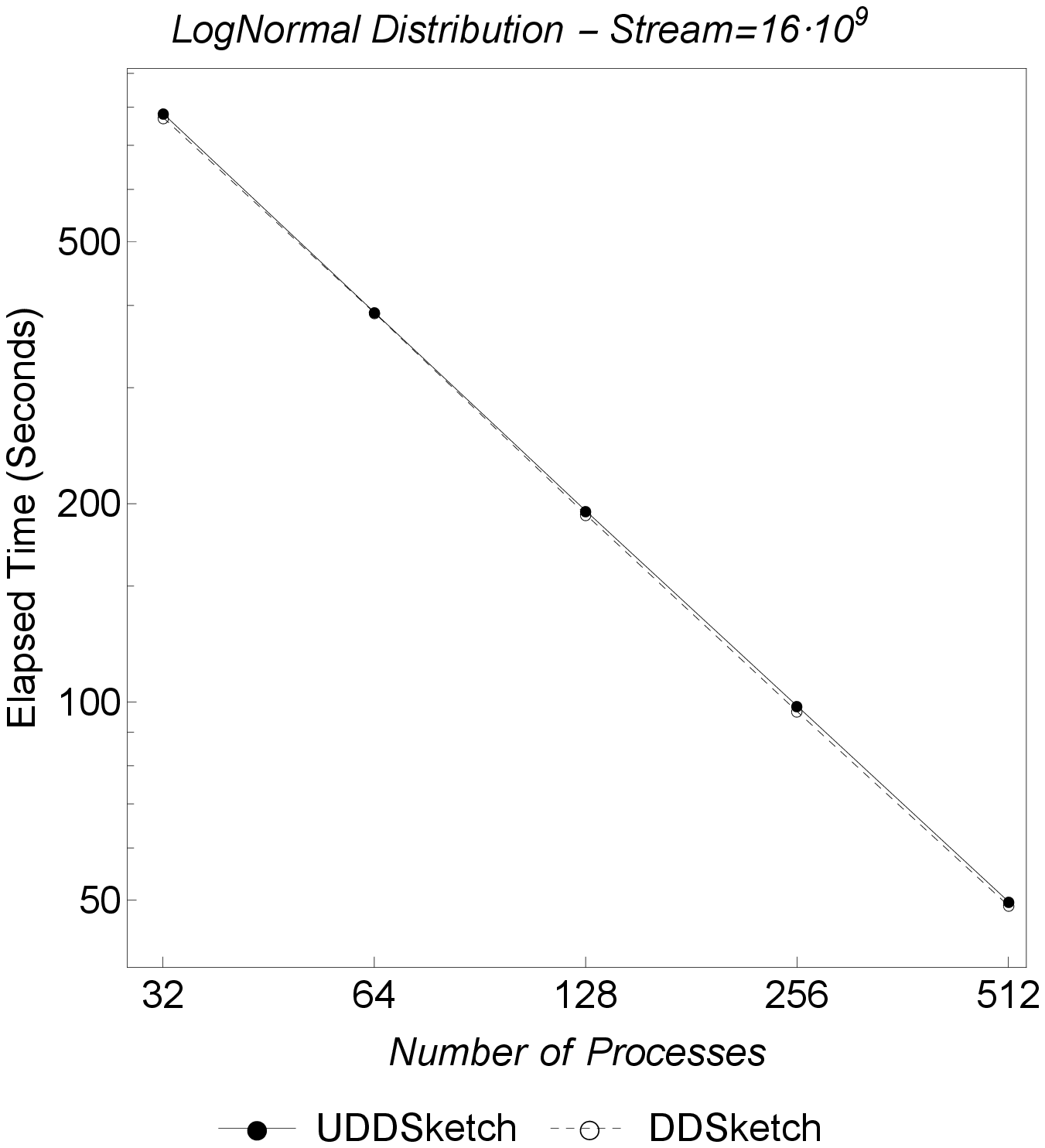}
			\label{lognormal}
		} \\
		
		\subfloat[]{
		\includegraphics[width=0.3\textwidth]{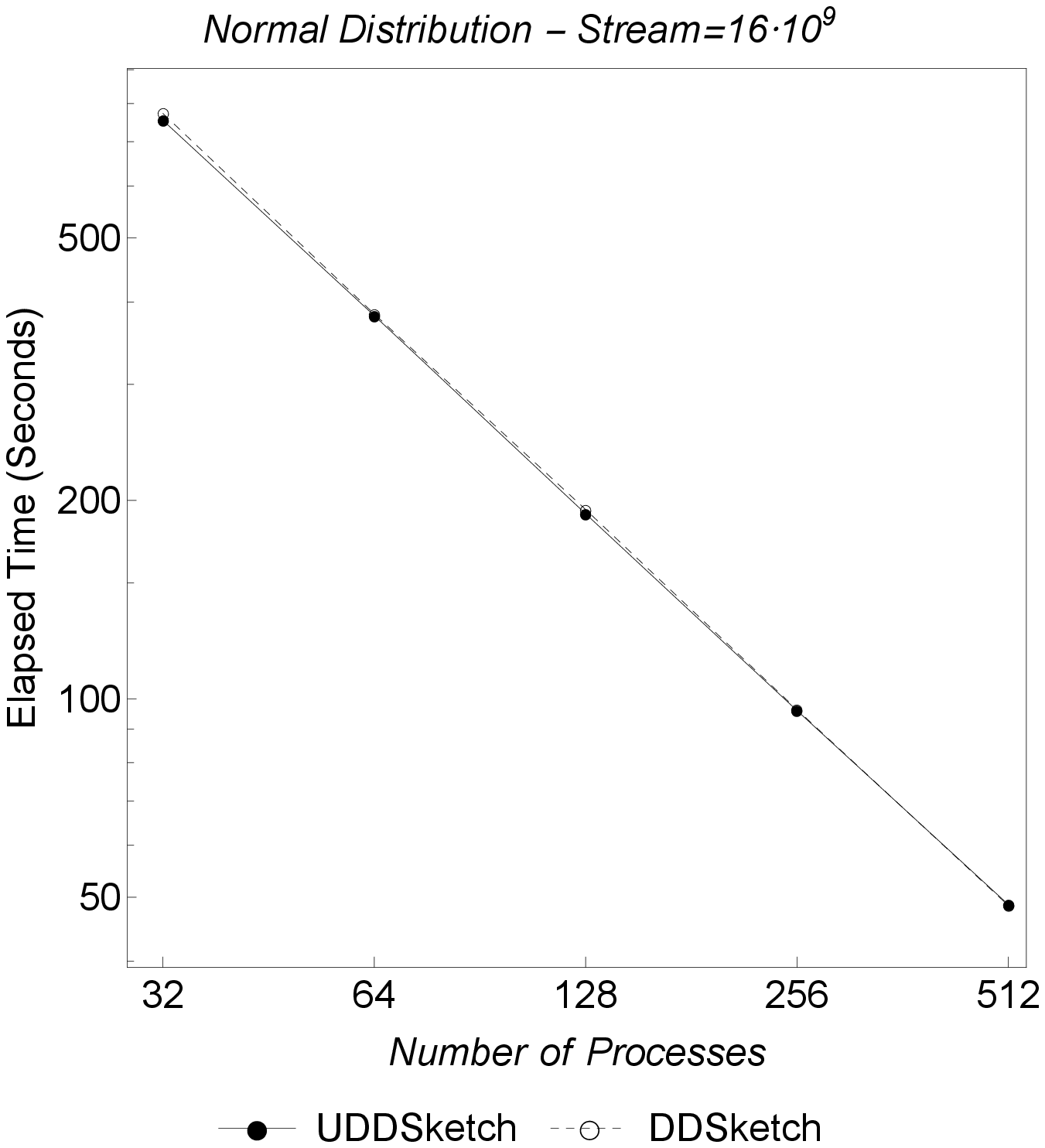}
		\label{normal}
		} &
		
		\subfloat[]{
			\includegraphics[width=0.3\textwidth]{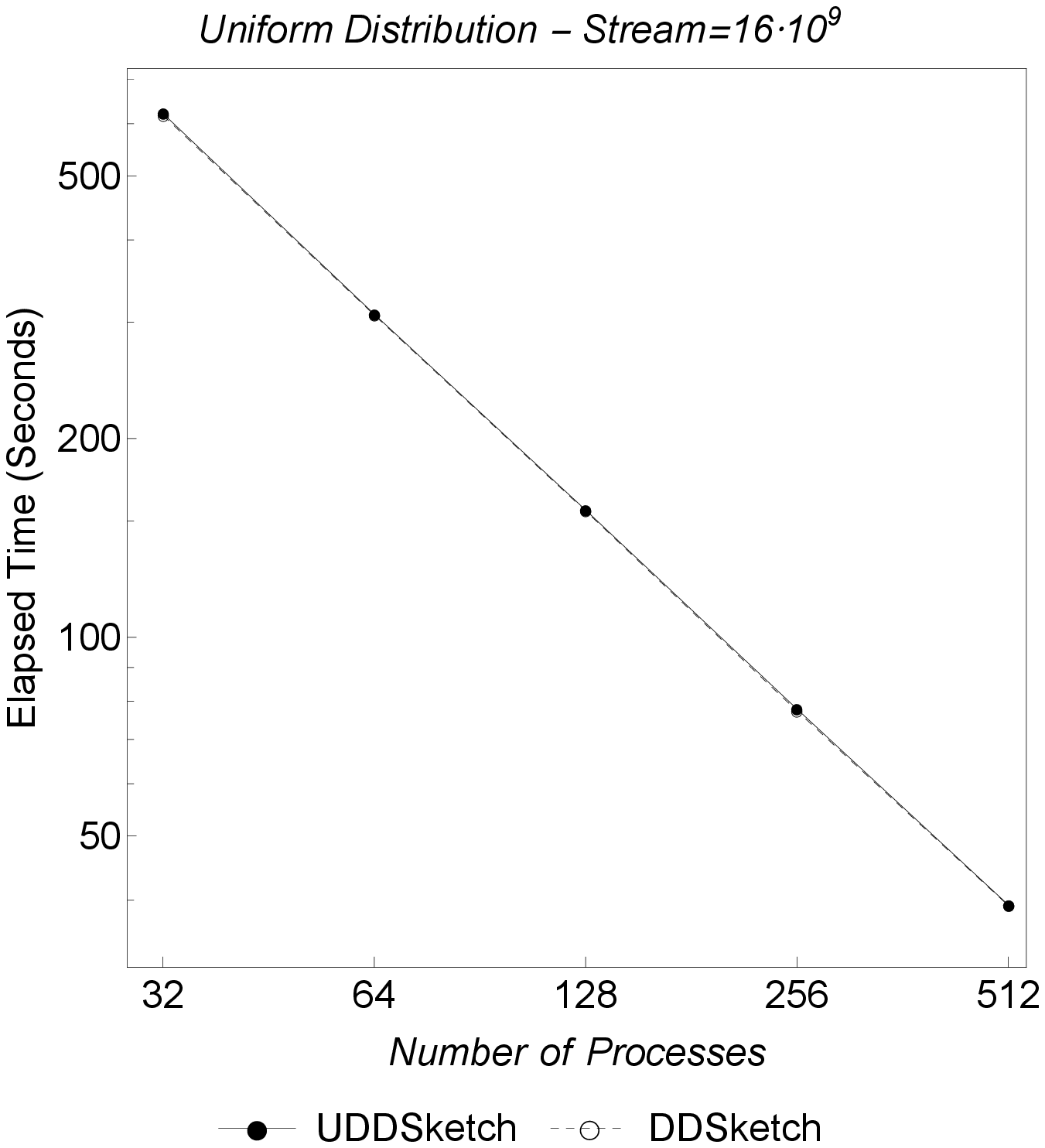}
			\label{uniform}
		}
	\end{tabular}
	
	\caption{Parallel Running time varying the number of processes in log-log plots.} 
	\label{execution_time}
\end{figure*}

\begin{table}
    \small
	\caption{Accuracy}
	\label{accuracy}
	\centering
	\begin{tabular}{@{}l|cc|cc@{}}
		& \multicolumn{2}{c|}{\textsc{DDSketch}} & \multicolumn{2}{c}{\textsc{UDDSketch}}\\ 
		\textbf{Dataset} & \textbf{$q_0$-accuracy} & \textbf{$\alpha$} & \textbf{$q_0$-accuracy} & \textbf{$\alpha$}\\
		\hline
	beta & 0.798 & 0.001 & 0 & 0.019 \\ 
	exponential & 0.998 & 0.001 & 0 & 0.031 \\ 
	lognormal & 0.999 & 0.001 & 0 & 0.031 \\ 
	normal & 0 & 0.001 & 0 & 0.001 \\ 
	uniform & 0.360 & 0.001 & 0 & 0.016 \\ 
	\end{tabular}
\end{table}

\section{Conclusions}
\label{conclusions}
In this paper we have introduced a parallel version of the \textsc{UDDSketch} algorithm for accurate quantile tracking and analysis, suitable for message-passing based architectures. The algorithm allows compressing and fusing big volume data streams (or big data) retaining the error and size guarantees provided by the sequential \textsc{UDDSketch} algorithm. We have formally proved its correctness and compared it to a parallel version of \textsc{DDSketch}. The extensive experimental results confirm the validity of our approach, since our algorithm almost always outperforms the parallel \textsc{DDSketch} algorithm with regard to the overall accuracy in determining the quantiles, providing simultaneously a good parallel scalability.

\section*{Acknowledgments} The authors would like to thank CINECA for granting the access to the Marconi M100 supercomputer machine through grant IsC80\_PDQA HP10CZD477, and Euro Mediterranean Center on Climate Change, Foundation, Italy for granting the access to the Zeus supercomputer machine.

\clearpage

\bibliographystyle{elsarticle-num}
\bibliography{bibliography}

\end{document}